\newtheorem{theorem}{Theorem}[section]
\newtheorem{lemma}[theorem]{Lemma}
\newenvironment{proof}[1][Proof]{\begin{trivlist}
\item[\hskip \labelsep {\bfseries #1}]}{\end{trivlist}}
\newenvironment{definition}[1][Definition]{\begin{trivlist}
\item[\hskip \labelsep {\bfseries #1}]}{\end{trivlist}}
\title{Are There Good Mistakes?\\ A Theoretical Analysis of CEGIS}
\author{Susmit Jha
\institute{Strategic CAD Labs, Intel}
\email{susmit.jha@intel.com}
\and
 Sanjit A. Seshia
\institute{EECS, UC Berkeley}
\email{\quad sseshia@eecs.berkeley.edu}
}
\begin{document}
\maketitle

\newcommand{\cegis}{\mathtt{CEGIS}}
\newcommand{\mncegis}{\mathtt{MinCEGIS}}
\newcommand{\hmncegis}{\mathtt{HCEGIS}}
\newcommand{\synth}{\mathtt{IS}}
\newcommand{\lang}{L}
\newcommand{\trace}{\tau}
\newcommand{\nat}{\mathbb{N}}
\newcommand{\range}{range}
\newcommand{\samples}{\mathtt{SMPL}}
\newcommand{\algo}{P}
\newcommand{\template}{\mathtt{TEMPLATE}}
\newcommand{\verifier}{\mathtt{CHECK}}
\newcommand{\mnverifier}{\mathtt{MINCHECK}}
\newcommand{\hmnverifier}{\mathtt{HCHECK}}
\newcommand{\cand}{\mathtt{CAND}}
\newcommand{\ce}{\mathtt{cex}}
\newcommand{\mncemap}{\mathtt{lce}}
\newcommand{\hmncemap}{\mathtt{lce}}
\newcommand{\engine}{T}
\newcommand{\siml}{\mathtt{sim}}

\begin{abstract}
Counterexample-guided inductive synthesis ($\cegis$) is used to synthesize programs from a candidate space of programs.
The technique is guaranteed to terminate and synthesize the correct program if the space of candidate programs is finite.
But the technique may or may not terminate with the correct program if the candidate space of programs is infinite.
In this paper, we perform a theoretical analysis of counterexample-guided inductive synthesis technique.
We investigate whether the set of candidate spaces for which the correct program can be synthesized using $\cegis$ depends on the
counterexamples used in inductive synthesis, that is, whether there are
{\it good mistakes} which would increase the synthesis power.
We investigate whether the use of minimal counterexamples instead of arbitrary counterexamples
expands the set of candidate spaces of programs for which inductive synthesis can successfully synthesize a correct program.
 We consider two kinds of  counterexamples:  minimal counterexamples
 and {\it history bounded} counterexamples. The history bounded
 counterexample used in
 any iteration of $\cegis$ is
 bounded by the examples used in previous iterations of inductive synthesis.
 We examine the relative change in power of inductive
 synthesis in both cases. We show that the synthesis technique using
 minimal counterexamples $\mncegis$ has the
 same synthesis power as $\cegis$
 but the synthesis technique using history bounded counterexamples $\hmncegis$
 has different power than that of $\cegis$, but none dominates the other.
\end{abstract}

\section{Introduction} \label{sec-intro}

Automatic synthesis of programs has been one of the
holy grails of computer science for a long time. It has found many practical
applications such as generating optimal code sequences~\cite{Massalin87,Joshi02},
optimizing performance-critical inner loops, generating
general-purpose peephole optimizers~\cite{Bansal06,bansal08},
automating repetitive programming,
and filling in low-level
details after the higher-level intent has been expressed~\cite{asplos06}.
A traditional view of program synthesis is that of synthesis
from complete specifications. One approach is to give
a specification as a formula in a suitable logic~\cite{Manna80,Manna92fundamentalsof,jha-11}.
Another is to write the specification as a simpler, but
possibly far less efficient program~\cite{Massalin87, asplos06,Joshi02}.
While these
approaches have the advantage of completeness of specification,
such specifications are often unavailable, difficult to
write, or expensive to check against using automated verification
techniques. This has led to proposal of oracle guided synthesis approach~\cite{jha-icse10}
in which the complete specification is not available. All these different variants
of automated synthesis techniques share some common characteristics. They
are iterative inductive synthesis techniques which require some kind of validation
engines to validate candidate programs produced at intermediate iterations, and
these validation engines identify
counterexamples, aka mistakes, which are subsequently used for inductive synthesis in the next iteration.
We collectively refer to such synthesis techniques as
counterexample-guided inductive synthesis, aka $\cegis$.

In this paper, we conduct a theoretical study of $\cegis$  by examining the impact
of using different kinds of validation engines which provide different nature of counterexamples.
$\cegis$ has been successfully used across domains and has been applied to areas such as integer program synthesis and controller design where the candidate set of designs is not finite, and the synthesis technique is not guaranteed to always succeed. This raises an interesting question whether the power of $\cegis$ can be improved by considering validation engines which provide {\it better} counterexamples
than any arbitrary counterexample.

We consider two kinds of counterexamples in this paper.
\begin{itemize}
\item First, we consider {\it minimal} counterexamples instead of arbitrary counterexamples. For any predefined ordering on the examples, we require that the validation engine provide a counterexample which is minimal. This defines an alternative synthesis technique: Minimal Counterexample Guided Inductive Synthesis $\mncegis$ where the validation engine returns {\it minimal} counterexamples.

    This choice of counterexamples is motivated by literature on debugging\footnote{ Practically, this would mean replacing satisfiability solving based verification engines with
    those using Boolean optimization such as maximum satisfiability solving techniques.}. Significant effort has been made on improving validation engines to produce counterexamples which aid debugging by localizing the error. The use of counterexamples in $\cegis$ conceptually is an iterative repair process and hence, it is natural to extend successful error localization and debugging techniques to inductive synthesis. Minimal counterexamples is inspired specifically from  \cite{antonio13,ChenSMV10}.

\item Second, we consider history bounded counterexamples where the counterexample produced by the validation engine must be smaller than a previously seen positive example. This defines another alternative synthesis technique: History Bounded Counterexample Guided Inductive Synthesis $\hmncegis$ where the validation engine returns {\it history bounded} counterexamples.

    This choice of counterexample is also motivated by literature on debugging. In particular, \cite{Groce06,wang-06} use distance of the counterexample from a correct example to help debug
    programs. If the counterexample is very close to a correct example, then the error localization
    would be more accurate. We use a similar notion and force the counterexamples produced by the validation engine to be close to some previously seen correct example.

\end{itemize}

For each of the variants of $\cegis$, we analyze whether it increases
the candidate spaces of programs where a synthesizer terminates with correct program.
We prove the following in the paper.
\begin{enumerate}
\item $\mncegis$ successfully terminates with correct program on a candidate space if and only if
$\cegis$ also successfully terminates with the correct program. So, there is no increase or decrease
in power of synthesis by using minimal counterexamples.
\item $\hmncegis$ can synthesize programs from some program classes where $\cegis$ fails to synthesize
the correct program. But contrariwise, $\hmncegis$ also fails at synthesizing programs from some program
classes where $\cegis$ can successfully synthesize a program. Thus, their synthesis power is not
equivalent, and none dominates the other.
\end{enumerate}

Thus, none of the two counterexamples considered in the paper are strictly {\it good} mistakes.
The history bounded counterexample can enable synthesis in additional classes of programs but
it also leads to loss of some synthesis power.

\section{Motivating Example}

In this section, we present a simple example that illustrates why it is non-intuitive to estimate
the change in power of synthesis when we consider alternative kinds of counterexamples.
Consider synthesizing a program which takes as input a tuple of two integers
$(x,y)$ and outputs $1$ if the tuple lies in a specific rectangle
$R$ (defined by diagonal points $(-1,-1)$ and $(1,1))$ and $0$ otherwise.

The target program is:
$$if\; ( (-1 \leq x \&\& x \leq 1)\&\&(-1\leq y \&\& y\leq 1) ) \; \; {op}=1 \;\; else \;\; op =0$$

The candidate program space is the space of all possible rectangles in $\mathbb{Z} \times \mathbb{Z}$
where $\mathbb{Z}$ denotes the set of integers, that is,

$$if\; ( (\alpha_x \leq x \&\& x \leq \beta_x)\&\&(\alpha_y\leq y \&\& y\leq \beta_y) ) \; \; op=1 \;\; else \;\; op =0$$

where $\alpha_x,\alpha_y,\beta_x,\beta_y$ are the parameters that need to be discovered by the synthesis
engine.

Now, consider a radial ordering of $(x,y)$ which uses $x^2+y^2 $ as the ordering index. If we consider synthesis using minimal counter-examples, it is clear that we can learn the rectangle: starting with an initial candidate program that always outputs $1$ for all $(x,y)$ in $\mathbb{Z} \times \mathbb{Z}$; validation engine producing minimum counterexamples would discover the rectangle boundaries. One possible sequence of minimal counterexamples would be $(0,2),(0,-2),(2,0),(-2,0)$. Since the boundary points form a finite set, $\mncegis$ will terminate with the correct program. But if the counterexamples are arbitrary
as in $\cegis$, it is not obvious whether the rectangle can be still learnt. Our paper proves that
$\cegis$ can also learn such a rectangle.

The question of synthesis power of different techniques using different nature of counterexamples is non-trivial when the space of programs is not finite. Even termination of inductive synthesis technique
is not guaranteed when the candidate space of programs is infinite. Thus, the question of comparing the
relative power of these synthesis techniques is interesting.

\section{Related Work} \label{sec-rel}
Automated synthesis of systems using counterexamples has been widely studied in literature~\cite{solar-thesis08, Srivastava-sigplan10,jha-icse10,henzinger-icalp03,chatterjee-AUAI05}
as discussed in Section~\ref{sec-intro}.
While the applications of $\cegis$ to different domains have been very extensively investigated, theoretical characterization
of the $\cegis$ approach independent of the application domain has received limited attention. To the best of our knowledge, this is the first attempt
at a theoretical investigation into how the nature of counterexamples in $\cegis$ would impact the power of inductive synthesis technique to
synthesize programs.

%$\cegis$ inductive generalization can also be viewed as {\it learning} from examples.

The inductive generalization used in $\cegis$ is similar to algorithmic learning from examples~\cite{gold67limit,feldman72-ic,Kearns-92,Cornuejols93,feldman-ml91}.
This relation between the two fields has been previously identified in ~\cite{jha-icse10}.
A learning procedure  is provided with strings from a formal language and the task of the learner is to
identify the formal grammar for the language. Learning is an iterative inductive inference
process. In each iteration, the learning procedure is provided a string. The string is either in the language, that is, it is a positive
example, or the string is not in the language, that is, it is negative example. Based on the examples, the learning procedure proposes
a formal grammar in each iteration. The learning procedure is said to be able to learn a formal language if the learner converges to the
correct grammar of the formal language after a finite number of iterations.
The algorithmic learning techniques can be classified across the following three dimensions:
\begin{enumerate}
\item Nature of examples: Examples could be restricted to only positive examples, or it could include negative examples too.
\item Memory of learner: The memory of the learner is allowed to grow infinitely or it could be bounded to a finite size.
\item Communication of examples to learner: The examples could be provided to the learner arbitrarily or as responses to specific kind of queries from the learner such as
membership or subset queries.
\end{enumerate}
We discuss the known theoretical results for algorithmic learning across these dimensions and
identify how the results presented in this paper extend these existing results.

Gold~\cite{gold67limit} considered the problem of learning formal languages from
examples. Similar inductive generalization techniques have been studied elsewhere in literature as well~\cite{JantkeB81,Wiehagen90,blum75,angluin1980inductive}.
The examples
are provided to learner as an infinite stream.
The learner is assumed to have unbounded memory and can store all the examples.
This model is unrealistic in a practical setting but provides useful theoretical understanding
of inductive generalization.
Gold defined a class of languages to be {\it identifiable in the limit} if there is a learning procedure which identifies the grammar
of the target language from the class of languages using a stream of input strings. The languages learnt using only positive
examples were called {\it text learnable} and the languages which require both positive and negative examples were termed
{\it informant learnable}. We examine the known results for both: {\it text learnable} and {\it informant learnable} classes
of languages.
None of the standard classes of formal languages are identifiable in the limit from
text, that is, from only positive examples~\cite{gold67limit}. This includes regular languages, context-free languages and context-sensitive languages. It is also
known that no class of language
with at least one infinite language over the same vocabulary as the rest of the languages in the class, can be learnt purely
from positive examples. We can illustrate this infeasibility of
identifying languages from positive examples with a simple
example.

Consider a vocabulary $V$ and let $V^*$ be all the strings that can be formed using vocabulary $V$. The strings in $V^*$ are $x_1, x_2, \ldots$. Let us consider the set of languages
$$L_1 = V^* - \{x_1\}, L_2 = V^* - \{x_2\}, \ldots$$
Now a simple algorithm to learn languages from positive examples can guess the language to be $L_i$ if $x_i$ is the string with the smallest index not seen so far as a positive example. This algorithm can be used to inductively identify the correct language using
just positive examples. But now, if we add a new language $V^*$ which contains all the strings from vocabulary $V$ to our class of language, that is, $$L_2 = V^*, V^* - \{x_1\}, L_2 = V^* - \{x_2\}, \ldots$$
The above algorithm would fail to identify this class of languages.

In fact, no algorithm using positive examples would be able to inductively identify this class of languages. The key intuition is that if the data is all positive, no finite trace of positive data can distinguish whether the currently guessed language is the target language or is merely a subset of the target language. Now, if we consider the presence of negative counterexamples, the learning or synthesis
algorithm can begin with the first guess as $V^*$. If there are no counterexamples, then $V^*$ is the
correct language. If a counterexample $x_i$ is obtained, then the next guess is $V^* - \{x_i\}$, and
this is definitely the correct language.

A detailed survey of classical results in learning from positive examples
is presented in \cite{Lange08}. The results summarize learning power with different limitations such as
the inputs having certain noise, that is, a string not in the target language might be provided as a positive example with a small probability. Learning using positive as well as negative examples has also been well-studied in literature. A detailed survey is presented in \cite{jain1999systems}
and \cite{lange2000algorithmic}.
In contrast to this line of work, $\cegis$ is a practical inductive generalization which
restricts the memory of the synthesis engine or learner. At any step, the synthesis engine
only has the candidate design and response from the verifier which can be stored in a finite
memory. Further, in contrast to learning from an infinite stream of positive and negative examples, $\cegis$ inductive generalization relies on using counterexamples. The positive and negative examples
used in $\cegis$ are not arbitrary but rather they depend on the
counterexample-generating verifier and the intermediate
candidate programs proposed by the synthesis engine.

Another related line of work is that of techniques using iterative algorithmic
learning with restricted memory of
the learner~\cite{Lange199688,wiehagen76}.
The learner or synthesis engine can use only finite memory but these techniques
rely on availability of an infinite stream of positive examples in addition to negative
examples. While the stream is not explicitly stored due to finite memory constraint, it can
be used for synthesizing intermediate concepts. In contrast, $\cegis$ relies on using
positive examples which are derived from the specification with respect to the counterexamples.
These techniques differ from $\cegis$ in the dimension of how counterexamples
are communicated to the learner or synthesis engine.

Angluin~\cite{angluin88} considered a
similar learning environment as $\cegis$ with respect to the communication of counterexamples
to the learner or synthesis engine. Angluin's learning model consists of a
teacher or oracle which provides responses to queries from the learner. The teacher in the
context of Angluin is analogous to verifier in $\cegis$ and the learner is the synthesis engine. Similar learning models have also being proposed in \cite{shapiro1982algorithmic,Haussler86,Blumer86,hegedus-colt94}.
But they focus on complexity analysis of learning techniques using different kinds
of queries such as membership queries, verification or equivalence queries and subset queries. In contrast, we restrict ourselves to verification queries and investigate the impact of substituting arbitrary counterexample producing verifiers with more powerful
verifiers which produce counterexamples which are minimal or bounded.

Verification techniques have been adapted to provide more meaningful
counterexamples~\cite{Groce06,antonio13,wang-06,ChenSMV10}
for the purpose of aiding design debugging.
The key idea is that these more powerful verification engines
that provide not just any arbitrary counterexamples
but rather a simpler counterexample with respect to some metric
can be used for better debugging. These simpler or minimal
counterexamples
provide the most information to help localize bugs in a faulty design.
If a counterexample trace is close to a correct trace and differs from a correct trace in
a minimal way, then it can be used more effectively to localize the source of bug and fix it.
It is natural to consider extending this use of minimal counterexamples
for debugging to also enable more powerful synthesis.
In this work,
we conduct a theoretical analysis of using these more power verification engines
and using counterexamples produced by these to aid $\cegis$ in synthesis.

\section{Notation} \label{sec-not}

In this section, we define some preliminary notation used in our definition and analysis of CEGIS and MinCEGIS. $\nat$ represents  the set of natural numbers. $\nat_i \subset \nat$ denotes a subset of natural numbers $\nat_i = \{ n | n < i\}$. $\min(S)$ denotes the minimal element in the
set $S$. The union of the sets is denoted by $\cup$ and the intersection of the sets
is denoted by $\cap$.

A sequence $\sigma$ is a mapping from $\nat$ to $\nat \cup \{\bot\}$.
We denote a prefix of length $k$ of a sequence by $\sigma[k]$.
So, $\sigma[k]$ of length $k$ is a mapping from $\nat_k$ to $\nat \cup \{\bot\}$.
$\sigma[0]$ is an empty sequence also denoted by $\sigma_0$. We denote the natural numbers in the range of $\sigma[i]$ by $\samples$, that is, $\samples(\sigma_i) = \range(\sigma_i) - \{\bot\}$.
The set of sequences is denoted by $\Sigma$.

We extend natural numbers to pairs. Let $\langle n_1, n_2 \rangle$ be any
bijective computable function from $\nat \times \nat \rightarrow \nat$
which is monotonically increasing in both of its arguments. Similarly, pairs
can be extended to $n$-tuples. Assuming existence of such a bijective mapping,
tuples can also be used in place of natural numbers as elements of a language.
A language in this case would be a subset of such tuples.

We also use standard definitions from computability theory~\cite{rogers-book87}. A set $\lang_i$ of natural numbers is called computable or recursive  language if there is an program, that is, a computable, total function $\algo_i$ such that for any natural number $n$, $\algo_i(n) = 1$ if $n \in \lang_i$ and $\algo_i(n) = 0$ if $n \not \in \lang_i$. We denote the complement of language $\lang_i$ by $\overline \lang_i$. We denote the union of two languages
$L_i$ and $L_j$ by $L_i \cup L_j$, and the intersection of two languages $L_i$ and $L_j$ by
$L_i \cap L_j$. Also for convenience, we use $\lang(\algo_i)$ to denote $\lang_i$ using the one to one mapping between languages and programs that identify them. Thus, we distinguish only between semantically different
programs and not the syntactically different programs which identify the same language.

The languages are sets of natural numbers. The natural numbers
correspond to indexed
elements of the language or valid input-output traces of the program.
Without loss of generality, the natural ordering
of natural numbers is used as an ordering of elements in the set.
In practice, this will correspond to some user-provided
ordering on the elements of the language.
For example, for a program manipulating strings, we can choose alphabetical ordering
and for program operating on numerical tuples, we can choose lexicographical ordering.
We define a minimum operator $\min(L)$ which uses this natural ordering to report the minimum
element in the language $L$. If the ordering is not total, $\min(L)$ denotes one of the
minimal elements in the language $L$ with respect to the given partial ordering.

Given a sequence $\mathscr{\lang}$ of non-empty languages $\lang_0, \lang_1, \lang_2, \ldots$, $\mathscr{\lang}$ is said to be an indexed family of languages if and only if there exists a recursive function $\template$ such that $\template(i,n) = \algo_i(n)$. We denote the corresponding set of programs $\algo_0, \algo_1, \algo_2, \ldots$ by $\mathscr{\algo}$. For brevity, we refer to $\template(i,n)$ also as $\algo_i(n)$. Intuitively,
$\template$ defines the encoding of candidate program space similar
to sketches in $\cite{asplos06}$ and the component interconnection encoding in
$\cite{jha-icse10}$. The index $i$ is used to index into this encoding to select a particular
program $\algo_i$. $\algo_i(n)$ denotes the output of the program on input $n$.

Inductive synthesis consists of synthesis engines $\engine$ each of which identify the correct program $\algo_i$ using a set of examples from the target language $\lang_i$ from a given indexed family of languages $\mathscr{\lang}$. So, the overall synthesis problem is as follows.
Let $\mathscr{\algo}$ be the class of candidate programs corresponding
 to indexed family of languages $\mathscr{\lang}$.
No, given some target language $\lang_i$ from $\mathscr{\lang}$,
the synthesis engine receives a set of examples
$\{n_1, n_2, \ldots \}$. The synthesis task is to identify $\algo_i$ corresponding to
$\lang_i$ from the candidate programs $\mathscr{\algo}$. $\cegis$ is a particular
kind of inductive synthesis techniques in which examples are obtained using
counterexamples produced through iterative validation of
inductively produced intermediate conjecture programs.  We use the notations
developed in this section to formally define concepts useful for theoretical analysis
of $\cegis$ in the next section.

\section{Definitions}

In this section, we present some definitions. Trace is a sequence of examples from the target language $\lang$.
The formal definition of trace is as follows:

\begin{definition}
 { Trace} $\trace$: A trace $\trace$ for a language $\lang$ is a sequence with
 $\samples(\trace) = \lang$. $\trace[i]$ denotes the prefix of the trace $\trace$ of length $i$.
 $\trace(i)$ denotes the $i$-th element of the trace.
\end{definition}

Counterexample guided inductive synthesis ($\cegis$) techniques employ a verifier to provide counterexamples. So, we define verifiers for a language formally below and then, give a formal definition of a $\cegis$ engine denoted by $\engine_{\cegis}$. Intuitively, the verifier returns a counterexample if the languages are different and returns $\bot$ if they are equivalent. We use one way difference instead of the symmetric difference between sets for ease of presentation.

\begin{definition}
A {verifier} $\verifier_\lang$ for a language $\lang$ is a non-deterministic mapping from $\mathscr{\lang}$ to $\nat \cup \bot$ \\such that $\verifier_\lang(\lang_i) = \bot$ if and only if $\lang_i \subseteq \lang$,
and $\verifier_\lang(\lang_i)  \in \lang_i \cap \overline \lang$ otherwise.
\end{definition}

\begin{definition}
A $\cegis$ engine $\engine_{\cegis} : \sigma \times \sigma \rightarrow \mathscr{\algo}$ is defined recursively below.\\
$\engine_{\cegis} (\trace[n], \ce[n]) = F( \engine_{\cegis} (\trace[n-1], \ce[n-1]),  \trace(n), \ce(n) ) $\\
where $F$ is a recursive function $\mathscr{\algo} \times \nat \times \nat \rightarrow \mathscr{\algo}$ that characterizes the engine and how it eliminates counterexamples, $\trace[n]$ is a trace for language $\lang$ and
$\ce$ is a counterexample sequence such that\\
$\ce(i) = \verifier_\lang(\lang(\engine_{\cegis}(\trace[i-1], \ce[i-1]))) $.\\
$\engine_{\cegis} (\sigma_0, \sigma_0)$ is a predefined constant representing an initial guess $\algo_0$ of the program, which for example,
could be program corresponding to the universal language $\nat$.
\end{definition}

Intuitively, $\cegis$ is provided with a trace along with a counterexample
trace formed by counterexamples to the latest conjectured languages.
Thus, $\cegis$ receives two inputs.
The counterexample is generated through a subset query.

\begin{definition}
We say that $\engine_{\cegis}$ converges to $\algo_i$ if and only if for all, but finitely many prefixes $\trace[n]$ of $\trace$, $\engine_{\cegis}(\trace[n], \ce[n]) = \algo_i$. We denote this by $\engine_{\cegis}(\trace, \ce) \rightarrow \algo_i$.
In other words, $\engine_{\cegis}(\trace, \ce) \rightarrow \algo_i$ if and only if
there exists $k$ such that for all $n\geq k$, $\engine_{\cegis}(\trace[n], \ce[n]) = \algo_i$.
\end{definition}

\begin{definition}
$\engine_{\cegis}$ identifies a language $\lang_i$ if and only if for all traces $\trace$ of the language $\lang_i$
and counterexample sequences $\ce$, $\engine_{\cegis}(\trace, \ce) \rightarrow \algo_i$.
$\engine_{\cegis}$ identifies a language family $\mathscr{\lang}$ if and only if $\engine_{\cegis}$ identifies every $\lang_i \in \mathscr{\lang}$.
\end{definition}

We now define the set of language families that can be identified by the counterexample guided synthesis engines as $\cegis$ formally below.

\begin{definition}
$\cegis = \{ \; \mathcal{L} \; | \; \exists \engine_{\cegis} \; .\; \engine_{\cegis}$ identifies $\mathscr{\lang} \}$
\end{definition}

Now, we consider a variant of counterexample guided inductive synthesis where we use minimal counterexamples instead of arbitrary counterexamples $\mncegis$. We define a minimal counterexample generating verifier before defining $\mncegis$. This requires
an ordering of the elements in the language.

\begin{definition}
A  {verifier} $\mnverifier_\lang$ for a language $\lang$ is a mapping from $\mathscr{\lang}$ to $\nat \cup \bot$ such that\\ $\mnverifier_\lang(\lang_i) = \bot$ if and only if $\lang_i \subseteq \lang$, and
$\mnverifier_\lang(\lang_i) = \min( \overline \lang \cap \lang_i)$ otherwise.
\end{definition}

\begin{definition}
A $\mncegis$ engine $\engine_{\mncegis} : \sigma \times \sigma \rightarrow \mathscr{\algo}$ is defined recursively below.\\
$\engine_{\mncegis} (\trace[n], \ce[n]) = F( \engine_{\mncegis} (\trace[n-1], \ce[n-1]),  \trace(n), \ce(n) ) $\\
where $F$ is a recursive function $\mathscr{\algo} \times \nat \times \nat \rightarrow \mathscr{\algo}$ that characterizes the engine and how it eliminates counterexamples, $\trace[n]$ is a trace for language $\lang$ and
$\ce$ is a counterexample sequence such that\\
$\ce(i) = \mnverifier_\lang(\lang(\engine_{\mncegis}(\trace[i-1], \ce[i-1]))) $.\\
$\engine_{\mncegis} (\sigma_0, \sigma_0)$ is a predefined constant representing an initial guess $\algo_0$ of the program, which for example,
could be program corresponding to the language $\nat$.
\end{definition}

The convergence of the $\mncegis$ synthesis engine to a language and family of languages is defined in similar way as $\cegis$.

\begin{definition}
We say that $\engine_{\mncegis}$ converges to $\algo_i$, that is,
$\engine_{\mncegis}(\trace, \ce) \rightarrow \algo_i$
if and only if
there exists $k$ such that for all
$n\geq k$, $\engine_{\mncegis}(\trace[n], \ce[n]) = \algo_i$.
\end{definition}

\begin{definition}
$\engine_{\mncegis}$ identifies a language $\lang_i$ if and only if for all traces
$\trace$ of the language $\lang_i$
and counterexample sequences $\ce$,
$\engine_{\mncegis}(\trace, \ce) \rightarrow \algo_i$.
$\engine_{\mncegis}$ identifies a language family $\mathscr{\lang}$ if and only if $\engine_{\mncegis}$ identifies every $\lang_i \in \mathscr{\lang}$.
\end{definition}

\begin{definition}
$\mncegis = \{ \; \mathcal{L} \; | \; \exists \engine_{\mncegis} \; .\; \engine_{\mncegis}$ identifies $\mathscr{\lang} \}$
\end{definition}

Next, we consider another variant of counterexample guided inductive synthesis $\hmncegis$ where we use {\it history bounded} counterexamples instead of arbitrary counterexamples. We define a history bounded counterexample generating verifier before defining $\hmncegis$. Unlike the previous
cases, the verifier for generating history bounded counterexample is also provided
with the trace seen so far by the synthesis engine. The verifier generates a counterexample
smaller than the largest element in the trace. If there is no counterexample smaller than the
largest element in the trace, then the verifier does not return any counterexample.
From the definition below, it is clear that we need to only order elements in the language
and do not need to define an ordering of $\bot$ with respect to the language
elements since the comparison is done between an element in non-empty $\lang \cap \lang_i$
and elements $\trace[j]$ in the trace.

\begin{definition}
A  {verifier} $\hmnverifier_\lang$ for a language $\lang$ is a mapping from $\mathscr{\lang} \times \sigma $ to $\nat \cup \bot$ such that\\ $\hmnverifier_\lang(\lang_i,\trace[n]) = m$ where $m \in \overline \lang \cap \lang_i \wedge m < \trace(j)$ for some $j \leq n$, and $\hmnverifier_\lang(\lang_i,\trace[n]) = \bot$ otherwise.
\end{definition}

\begin{definition}
A $\hmncegis$ engine $\engine_{\hmncegis} : \sigma \times \sigma \rightarrow \mathscr{\algo}$ is defined recursively below.
$\engine_{\hmncegis} (\trace[n], \ce[n]) = F( \engine_{\hmncegis} (\trace[n-1], \ce[n-1]),  \trace(n), \ce(n) ) $\\
where $F$ is a recursive function $\mathscr{\algo} \times \nat \times \nat \rightarrow \mathscr{\algo}$ that characterizes the engine and how it eliminates counterexamples, $\trace[n]$ is a trace for language $\lang$ and
$\ce$ is a counterexample sequence such that\\
$\ce(i) = \hmnverifier_\lang(\lang(\engine_{\mncegis}(\trace[i-1], \ce[i-1])), \trace[i-1]) $.\\
$\engine_{\hmncegis} (\sigma_0, \sigma_0)$ is a predefined constant representing an initial guess $\algo_0$ of the program, which for example,
could be program corresponding to the language $\nat$.
\end{definition}

The convergence of the $\hmncegis$ synthesis engine to a language and family of languages is defined in similar way as $\cegis$ and $\mncegis$.

\begin{definition}
We say that $\engine_{\hmncegis}$ converges to $\algo_i$, that is,
$\engine_{\hmncegis}(\trace, \ce) \rightarrow \algo_i$
if and only if
there exists $k$ such that for all
$n\geq k$, $\engine_{\hmncegis}(\trace[n], \ce[n]) = \algo_i$.
\end{definition}

\begin{definition}
$\engine_{\hmncegis}$ identifies a language $\lang_i$ if and only if for all traces
$\trace$ of the language $\lang_i$
and counterexample sequences $\ce$,
$\engine_{\hmncegis}(\trace, \ce) \rightarrow \algo_i$.
$\engine_{\hmncegis}$ identifies a language family $\mathscr{\lang}$ if and only if $\engine_{\hmncegis}$ identifies every $\lang_i \in \mathscr{\lang}$.
\end{definition}

\begin{definition}
$\hmncegis = \{ \; \mathcal{L} \; | \; \exists \engine_{\hmncegis} \; .\; \engine_{\hmncegis}$ identifies $\mathscr{\lang} \}$
\end{definition}

\section{Main Result}
In this section, we present the main results of the paper. We first compare $\mncegis$ and
$\cegis$ in the first part of the section followed by $\hmncegis$ and $\cegis$ in the second part of the
section. Since the focus of our work is to analyze the impact of change in the power of counterexample
providing verification engine, we fix the inductive generalization function $F$ that eliminates counterexamples.
So, we vary the counterexample generating verifier $\verifier_\lang$, $\mnverifier_\lang$ and
$\hmnverifier_\lang$ but $F$ is constant in our  definitions of $\cegis$, $\mncegis$ and $\hmncegis$. In the
rest of the section, we present the two central results of this paper:
\begin{enumerate}
\item $\mncegis = \cegis$
\item $\hmncegis \not = \cegis$
\end{enumerate}

\subsection{Synthesis Using Minimal Counterexamples}
We investigate whether $\mncegis = \cegis$  and prove that it is in fact true. So, replacing a verification engine
which returns arbitrary counterexamples with a verification engine which returns minimal counterexamples does not increase the power of
inductive synthesis system.
The main result regarding this non-intuitive fact that there is no change in the
power of synthesis technique by using
minimal counterexamples is summarized in Theorem~\ref{thm-min}.

\begin{theorem} \label{thm-min}
The power of synthesis techniques using arbitrary counterexamples and those using minimal counterexamples are equivalent, that is, $\mncegis = \cegis$.
\end{theorem}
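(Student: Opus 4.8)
The plan is to prove the set equality $\mncegis = \cegis$ by establishing the two inclusions separately, exploiting the asymmetry between the two models. The minimal verifier $\mnverifier_\lang$ is deterministic, so an $\mncegis$ engine need only converge on the single counterexample sequence it induces; the arbitrary verifier $\verifier_\lang$ is non-deterministic, so a $\cegis$ engine must converge on \emph{every} admissible counterexample sequence. Thus $\cegis$ is the more demanding condition, and I expect $\cegis \subseteq \mncegis$ to be routine while $\mncegis \subseteq \cegis$ carries the real content.

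For $\cegis \subseteq \mncegis$, the key observation is that a minimal counterexample is in particular a legal arbitrary counterexample: whenever $\lang_i \not\subseteq \lang$ we have $\mnverifier_\lang(\lang_i) = \min(\overline\lang \cap \lang_i) \in \overline\lang \cap \lang_i$, which is exactly the membership condition required of $\verifier_\lang(\lang_i)$. Hence the deterministic sequence produced by $\mnverifier_\lang$ is one of the sequences quantified over in the definition of $\cegis$ identification. Therefore, given a family $\mathcal{L} \in \cegis$ witnessed by an engine with update function $F$, reusing the same $F$ with $\mnverifier_\lang$ in place of $\verifier_\lang$ yields an engine that converges on that sequence for every trace, so it identifies $\mathcal{L}$ and $\mathcal{L} \in \mncegis$.

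The substantive direction is $\mncegis \subseteq \cegis$. Here I would start from an $\mncegis$ engine with update function $F$ that converges under minimal counterexamples, and build a $\cegis$ engine that recovers the same behaviour when confronted with arbitrary counterexamples. The core idea is a simulation: the $\cegis$ engine internally runs $F$, but before applying $F$ it replaces the arbitrary counterexample $c \in \overline\lang \cap \lang_k$ returned for its current conjecture $\lang_k$ by the genuine minimal counterexample $\min(\overline\lang \cap \lang_k)$. To pin down that minimal element without direct access to $\lang$, I would use three sources of information: the arbitrary $c$ bounds the minimum from above, since $\min(\overline\lang \cap \lang_k) \le c$; the positive examples arriving on the trace $\trace$ certify which elements below $c$ already lie in $\lang$ and are therefore not counterexamples; and the remaining smaller members of $\lang_k$ can be probed by steering the conjecture to family members on which the verifier is forced to reveal whether those elements are counterexamples. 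Once the minimal counterexample is determined, the engine feeds it to $F$ and advances the simulated $\mncegis$ computation by one genuine step, so that its final output coincides with that of $F$ under minimal counterexamples.

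The hard part will be carrying out this reconstruction within the model's constraints. The engine's only persistent state is its current conjecture, since the recursion exposes just the previous conjecture together with the single new pair $\trace(n), \ce(n)$, and every conjecture it emits must be a member of the given family $\mathscr{\lang}$. I would therefore need to encode the progress of the minimal-counterexample search into the sequence of conjectures itself, and argue that these auxiliary conjectures are eventually abandoned so that the extra probing steps are absorbed harmlessly into the finitely many non-final outputs that identification in the limit permits. The most delicate point is that a needed positive example may appear on the trace only finitely often, so the engine cannot simply wait for confirmation; establishing that each relevant minimal counterexample is nonetheless determined after finitely many trace symbols and auxiliary queries, and that the interleaved simulation still converges to $\algo_i$ exactly when the underlying $F$ does, is where the main effort lies.
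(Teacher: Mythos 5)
Your easy direction ($\cegis \subseteq \mncegis$) matches the paper's, and your overall strategy for $\mncegis \subseteq \cegis$ --- simulate the update function $F$ but feed it the true minimal counterexample, reconstructed via auxiliary probing conjectures whose bookkeeping is absorbed into the finitely many non-final outputs --- is the same simulation the paper carries out. However, there is a genuine gap at the crux: you never pin down \emph{how} the engine decides, in finitely many steps, whether a given element $m \in \lang_k$ with $m$ below the received counterexample $c$ lies in the target language $\lang$. Of the three information sources you list, the one you lean on hardest --- waiting for positive examples on the trace to certify membership --- has exactly the termination problem you flag in your last sentence and do not resolve: the engine has no way to know when to stop waiting for an element that will never arrive because it is not in $\lang$. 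The third source (``steering the conjecture to family members on which the verifier is forced to reveal whether those elements are counterexamples'') gestures at the right idea but is left entirely unspecified, and restricting to members of the given family may make it impossible to isolate a single element.

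The missing idea is that the subset-query verifier can be turned into a membership oracle: conjecture the language $\lang_k \cap \{i\}$, which is either empty or the singleton $\{i\}$; the verifier returns a counterexample if and only if $i \in \lang_k$ and $i \notin \lang$. Iterating $i = 0, 1, 2, \ldots$ and stopping at the first $i$ that elicits a counterexample yields exactly $\min(\overline{\lang} \cap \lang_k)$ after finitely many steps (finitely many because $\lang_k \not\subseteq \lang$ guarantees such an $i$ exists), with no reliance on the trace or on the value of $c$ at all. This single observation is what makes the paper's simulation go through; without it, or a concrete substitute, your reconstruction of the minimal counterexample does not terminate and the proof does not close. (You do correctly note a constraint the paper itself glosses over --- that conjectures are nominally required to come from the indexed family, which the singletons $\lang_k \cap \{i\}$ need not --- but raising the obstacle is not the same as overcoming it.)
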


\begin{proof}
$\cegis \subseteq \mncegis$ trivially. $\mnverifier_\lang$ is a special case of $\verifier_\lang$ and minimal counterexample
reported by $\mnverifier_\lang$ can be treated as arbitrary counterexample to simulate $\cegis$ using $\mncegis$ . Intuitively,
using minimal counterexample is not worse than using arbitrary counterexamples. \\
\\
The more interesting case to prove is $\mncegis \subseteq \cegis$. For a language $\lang$, let $\mncegis$ converge to
$\algo$ on  trace $\trace$. We show that $\engine_{\cegis}$ can simulate $\engine_{\mncegis}$ and also converge to $\algo$
on trace $\trace$.\\
\\
The proof idea is to simulate $\engine_{\mncegis}$ in two phases. In one phase, $\engine_{\cegis}$ finds the minimal
counterexample for a candidate language $L_j$ by iteratively calling $\verifier_\lang$  on $L_j \cap \{i\}$ where
$i = 0,1,2,3\ldots$. The minimum $i$ for which $\verifier_\lang$ returns a counterexample for $L_j \cap \{i\}$ is the
minimum counterexample. In the second phase, $\engine_{\cegis}$ consumes the next elements from the trace.
While searching for minimum counterexample, $\engine_{\cegis}$ needs to store the backlog of the traces as well
as cache the minimum counterexample for candidate languages.\\
\\
We now present the formal description of the proof.
For this simulation, we use some auxiliary variables maintained by $\engine_{\cegis}$ which store some finite information required
for simulating $\engine_{\mncegis}$. The key idea is for $\engine_{\cegis}$ to iteratively guess the minimal counterexample
in multiple micro-steps
and then use that to simulate one step of $\engine_{\mncegis}$. But simulating each step of $\engine_{\mncegis}$ takes finite number
of micro-steps for $\engine_{\cegis}$ and uses finite storage.\\
\\
The first auxiliary component for this simulation is a minimal counterexample map
$$\mncemap: \mathscr{\algo} \rightarrow \nat \cup \{ \top \} \cup \{ \bot \}$$
Intuitively, this maps a candidate program $\algo_i$ (language $\lang_i$) to  minimal counterexample as known to $\engine_{\cegis}$ so far
in simulating $\engine_{\mncegis}$. If minimal counterexample is not known for a given program, $\mncemap$ maps the program to $\top$.
If there is no counterexample to a given program, $\mncemap$ maps the program to $\bot$.
At any given step, only finite number of programs have their minimal counterexamples known, and the rest are mapped to $\top$.

Next, we define a mapping $\engine_{\mncemap}$ from $\mathscr{\algo} \times \sigma \rightarrow \mathscr{\algo}$ which simulates
$\engine_{\mncegis}$ based on the known $\mncemap$ so far, that is,
$$\engine_{\mncemap}(\algo, \trace[n]) = \algo_n \; \mathtt{ where } \; \algo_{i} = F (\algo_{i-1}, \trace(i), \mncemap(\algo_{i-1})) \; \mathtt{for} \; i=1,2,\ldots \; \mathtt{and} \; \algo_0 = \algo$$
if $\mncemap(\algo_{i})$ is defined for $i=1,2,\ldots$ and it is undefined if  $\mncemap(\algo_{i})$ is $\top$ for any $i$. \\
\\
$\engine_{\mncemap}$ simulates $\engine_{\mncegis}$ using the same counterexamples and intermediate candidate programs
for the known history $\mncemap$. If $\mncemap$ is $\top$ for any of the intermediate programs, $\engine_{\mncegis}$
is undefined.
Further, we record the program proposed by $\engine_{\mncegis}$ into the variable $\algo_\siml^m$ and the last
 program which initiated search for minimal counter example in $\algo_{last}$. $\trace_\siml^m$  records
the part of the trace already simulated by $\engine_{\cegis}$ and $\mu$ is the candidate minimal counterexample
while searching for minimal counterexample.
\\
\\
\noindent \textbf{Initialization}:
All the internal auxiliary variables are initialized as follows. $\algo_\siml^0 = \algo_0$ which is the same initialization as
$\engine_{\mncegis}$ being simulated, $\mu = 0$, $\algo_{last} = \algo_0$, and $\trace_\siml^0 = \sigma_0$. $\mncemap$ is initialized
to map all $\algo$ to $\top$ as no minimal counterexamples are known at the beginning.\\
\\
\noindent \textbf{Update}:
We describe the updates made in each iteration $m$. One of the following cases is true in each iteration.\\
\textbf{Case 1}: If $\algo_{last} = \algo_\siml^m$,  that is, we are in lock-step with the $\mncegis$ synthesis algorithm with the same candidate program.\\
\textbf{Case 1.1}: If there is any counterexample for $\algo_\siml^m$ (found using the verifier for $\engine_{\cegis}$), that is, the candidate
program has a counterexample and we need to find the corresponding minimal counterexample.\\
\textbf{Case 1.1.1}: If $\mncemap(\algo_\siml^m)$ is not $\top$, that is, the minimal counterexample for candidate program is already part of
$\mncemap$.\\
Let $\trace_{done}$ be the longest prefix for $\trace_\siml^m \trace(m+1)$ such that $\engine_{\mncemap}(\algo_\siml^m, \trace_{done})$ is defined.
$\trace_{done} \trace_\siml^{m+1} = \trace_\siml^m \trace(m+1)$, $\algo_\siml^{m+1} = \engine_{\mncemap}(\algo_\siml^m, \trace_{done})$, $\algo_{last} = \algo_\siml^{m+1}$\\
We use the minimal counterexample from $\mncemap$ and then advance the simulation $\trace_{done}$ traces ahead if $\engine_{\mncemap}$ can simulate
the trace using minimal counterexamples from $\mncemap$  for all the intermediate candidate programs.\\
\textbf{Case 1.1.2}: If $\mncemap(\algo_\siml^m)$ is $\top$, that is, the minimal counterexample for candidate program is not known.\\
 $\trace_\siml^{m+1} = \trace_\siml^m \trace(m+1)$, $\algo_\siml^{m+1} = \algo_\siml^m \cap \{0\}$\\
We initialize the candidate language $\algo_\siml^{m+1} $ for searching for minimal counterexample to  $\algo_\siml^{m} \cap \{0\}$, that is,
it is either the language consisting only of the minimal element $\{ 0\}$ or is empty. Since our verifier uses a subset query, empty language will return
no counterexamples.  \\
\textbf{Case 1.2}: If there is no counterexample for $\algo_\siml^m$,\\
Let $\trace_{done}$ be the longest prefix for $\trace_\siml^m \trace(m+1)$ such that $\engine_{\mncemap}(\algo_\siml^m, \trace_{done})$ is defined.
$\trace_{done} \trace_\siml^{m+1} = \trace_\siml^m \trace(m+1)$, $\algo_\siml^{m+1} = \engine_{\mncemap}(\algo_\siml^m, \trace_{done})$, $\algo_{last} = \algo_\siml^{m+1}$
and $\mncemap(\algo_\siml^m) = \bot$,\\
The candidate program seen so far is subset of the target language and we consume as much of the trace $\trace_{done}$  as possible
for which $\engine_{\mncemap}$ is defined. \\
\\
\textbf{Case 2}: If $\algo_{last} \not = \algo_\siml^m$, that is, the simulation is trying to find the minimum counterexample as a result of case 1.1.2.\\
\textbf{Case 2.1}: If there is any counterexample $\ce_\siml$ for $\algo_\siml^m$ (found using the verifier for $\engine_{\cegis}$),\\
 Update $\mncemap(\algo_{last}) = \ce_\siml$. Let $\trace_{done}$ be the longest prefix for
$\trace_\siml^m \trace(m+1)$ such that $\engine_{\mncemap}(\algo_{last}, \trace_{done})$ is defined.
$\trace_{done} \trace_\siml^{m+1} = \trace_\siml^m \trace(m+1)$, $\algo_\siml^{m+1} = \engine_{\mncemap}(\algo_{last}, \trace_{done})$, $\algo_{last} = \algo_\siml^{m+1}$, $\mu = 0$\\
If there is a counterexample, since the candidate language was a single element set or empty, and verification engine checks for
containment in the target language, the only element in the language has to be the counterexample. Further, starting from step 1.1.2 and with
possible increments in step 2.2, we stop with the minimal counterexample in this step and add it to the $\mncemap$.\\
\textbf{Case 2.2}: If there is no counterexample for $\algo_\siml^m$, that is, we have not yet found the minimal counterexample.\\
$\mu = \mu +1 $, $\algo_\siml^{m+1} = \algo_{last} \cap \{\mu\}$, and  $\trace_\siml^{m+1} = \trace_\siml^m \trace(m+1)$.\\
We increment $\mu$ and search for whether $\algo_{last} \cap \{\mu\}$ is in the target language. This is either empty or is a language
consisting of a single element $\{\mu\}$.\\
\\
\textbf{Progress}:
Now, we first show progress of the simulation in parsing trace $\trace[m]$. For any $m$, there exists $m' > m$ such that $\trace[m] = \trace_{done}^m \trace_\siml^m$, $\trace[m'] = \trace_{done}^{m'} \trace_\siml^{m'}$ and $\trace_{done}^{m}$ is a proper prefix of
$\trace_{done}^{m'}$. This follows from the observation that Case 2.2 can not be repeated infinitely after Case 1.1.2 since $\algo_{last}$ has at least one counterexample. So, case 2.1 would eventually become true and since $\mncemap$ is extended, $\engine_{\mncegis}$ would be defined for a longer prefix. \\
\\
\textbf{Correctness}:
Let $\engine_{\mncegis}$ converge on $\trace$ after reading prefix $\trace[n]$. From progress, after some $m$, $\trace[n]$ would
be a prefix of $\trace_{done}^m$. Since $\engine_{\mncegis}$  converges after reading $\trace[n]$, $F(\algo_n,\trace(n'),\ce(n')) = \algo_n$
for $n' > n$. Now, $\mncemap$ is not $\top$ for all intermediate programs $\algo_{m''}$ in $\engine_{\mncegis}$ for $m'' \leq m$.
So, $\algo_\siml^{m} = \engine_{\mncemap}(\algo_\siml^0, \trace[m]) = \engine_{\mncemap}(\algo_0, \trace[m]) = \algo_n$ and
for all $m' > m$, $\algo_\siml^{m'} = F(\algo_n,\trace(n'),\ce(n'))  = \algo_n$
So, $\engine_{\cegis}$ also converges to
$\algo_n$, that is, $\mncegis \subseteq \cegis$.\\
\\
Thus, $\mncegis = \cegis$.

\end{proof}

Thus, $\mncegis$ successfully terminates with correct program on a candidate space if and only if
$\cegis$ also successfully terminates with the correct program. So, there is no increase or decrease
in power of synthesis by using minimal counterexamples.

\subsection{Synthesis Using {\it \bf History Bounded} Counterexamples}

We investigate whether $\hmncegis = \cegis$ or not,  and prove that they are not equal.
So, replacing a verification engine
which returns arbitrary counterexamples with a verification engine which returns
counterexamples bounded by history has impact on the power of the synthesis technique.
But this does not strictly increase the power of synthesis. Instead, the
use of history bounded counterexamples does allow programs from
new classes to be synthesized but at the same time, program from some
program classes which were amenable to
$\cegis$ can no longer be synthesized using history bounded counterexamples.
The main result regarding the power of synthesis techniques using
 history bounded counterexamples is summarized in Theorem~\ref{thm-hb}.

\begin{theorem} \label{thm-hb}
The power of synthesis techniques using arbitrary counterexamples and those using
history bounded counterexamples are not equivalent, and none is more powerful than the other.
$\hmncegis \not = \cegis$. In fact,  $\hmncegis \not \subseteq \cegis$ and $\cegis \not \subseteq \hmncegis$.
\end{theorem}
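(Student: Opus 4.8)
The plan is to prove the two non-inclusions separately, each by exhibiting a single indexed family. Throughout I exploit the fact that the engines defined here are \emph{iterative}: unrolling the recursion $\engine(\trace[n],\ce[n]) = F(\engine(\trace[n-1],\ce[n-1]),\trace(n),\ce(n))$, the entire state carried from one step to the next is the current conjecture $\algo \in \mathscr{\algo}$, so no engine may store the full history of positive examples and counterexamples. This memory restriction is exactly what makes the two verifiers genuinely inequivalent, and both arguments below lean on it.

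For $\cegis \not\subseteq \hmncegis$ I would use the family $\mathscr{\lang} = \{\nat\} \cup \{\nat_k : k \geq 1\}$, which is indexed since $\nat_k = \{n : n<k\}$ is uniformly decidable. First I show $\mathscr{\lang} \in \cegis$: the engine starts from the conjecture $\nat$ and, on each counterexample $c$, moves to $\nat_c$ (keeping the smaller of $c$ and its current bound). Any counterexample to $\nat_u$ against target $\nat_k$ lies in $\{k,\dots,u-1\}$, so the bound strictly decreases while staying $\geq k$, forcing convergence to $\nat_k$; on target $\nat$ no counterexample ever appears and the engine stays at $\nat$. The whole state is the current bound, so the engine is iterative. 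Then I show $\mathscr{\lang} \notin \hmncegis$ by an information-theoretic diagonalization needing no memory assumption. Fix any $\hmncegis$ engine identifying $\nat$. On a text for $\nat$ every conjecture is a subset of $\nat$, so the only admissible counterexample sequence is all-$\bot$; hence the engine converges to $\algo_0$ after some finite prefix $\trace[N]$ with largest element $M$. Choosing $k>M$ and feeding a text for $\nat_k$ that begins with $\trace[N]$, every over-inclusion of any family member relative to $\nat_k$ sits at indices $\geq k$ while the trace never exceeds $k-1$, so $\hmnverifier$ is \emph{forced} to return $\bot$ at every step (a false $\bot$). The engine therefore sees exactly the inputs of the $\nat$-run and stays frozen at $\algo_0 \neq \algo_k$; since this counterexample sequence is admissible, identification of $\nat_k$ fails.

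For $\hmncegis \not\subseteq \cegis$ the mechanism I want to turn into an advantage is precisely this false $\bot$: because $\hmnverifier_\lang$ answers $\bot$ whenever a conjecture over-includes only beyond the largest positive example seen, an iterative $\hmncegis$ engine can conjecture a globally-incorrect but locally-correct language and still receive a \emph{stable affirmative} signal, letting it keep refining without discarding progress — whereas in $\cegis$ that same conjecture always produces a counterexample. I would build a family of infinite languages with a limit language so that (i) an iterative $\hmncegis$ engine converges, using the bounded counterexamples to read off the target's initial segment while the false $\bot$ certifies the part already fixed; yet (ii) for every iterative $\cegis$ engine an adversary, exploiting that the engine's only memory is its current conjecture, returns an unbounded stream of ever-larger arbitrary counterexamples that repeatedly knocks the engine off the correct hypothesis, so it never stabilizes.

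The main obstacle is exactly part (ii): the diagonalization must range over \emph{all} iterative $\cegis$ engines and show each can be driven to infinitely many mind-changes on some target and some admissible counterexample sequence — the memory limitation is essential, since an unbounded-memory learner could simply record every counterexample. The clean lemma I would isolate is a \emph{stability} property: with the history bound in force, the conjecture sequence on any fixed target is eventually constant, and this same property provably fails for the unbounded verifier on the chosen family. Establishing this stability for the $\hmncegis$ engine together with the complementary instability for every $\cegis$ engine is where the real work lies, and getting the family to satisfy both simultaneously (so that history bounding helps rather than merely delaying detection, given that the trace is adversarial and the target infinite) is the delicate point of the whole theorem.
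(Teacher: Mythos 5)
Your proof of $\cegis \not\subseteq \hmncegis$ is sound in substance and takes a genuinely different (and in one respect stronger) route than the paper. The paper uses only the finite initial segments $\lang_i = \{n \mid n \leq i\}$ and argues that the history-bounded verifier is forced to stay silent; but silence of the verifier alone does not rule out identification, since that family is learnable from the positive trace alone (conjecture the segment ending at the largest element seen, which an iterative engine can encode in its current hypothesis). By adjoining the limit language $\nat$ you make the family genuinely non-text-learnable and then show the counterexample channel carries no information, which closes that loophole. One step in your writeup is too quick: after you switch from the $\nat$-text to a text for $\nat_k$, the engine does \emph{not} ``see exactly the inputs of the $\nat$-run,'' because the positive examples after the prefix differ. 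You need the standard locking-sequence argument (on target $\nat$ the counterexample stream is forced to be all-$\bot$, so the engine is effectively a text-learner and has a locking sequence $\sigma$ for $\nat$; take $k$ larger than every element of $\sigma$ and extend $\sigma$ to a text of $\nat_k$, on which the forced all-$\bot$ stream keeps the output locked at the program for $\nat$). This is routine to add.

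The direction $\hmncegis \not\subseteq \cegis$, however, is not proved: you correctly diagnose that the memory restriction is what must be exploited, but you explicitly stop at a plan and never produce the family or the two arguments, and the plan as stated points away from what actually works. The paper's construction uses \emph{finite} languages (not ``infinite languages with a limit language''): pairs $\langle j,n\rangle$ with $j\in\{0,1\}$, a family $\mathcal{L}^{fin}$ of finite languages each containing a marker element $\langle 1,\cdot\rangle$, together with a diagonal family $\mathcal{L}^{diag}$ of languages supported on $\langle 0,\cdot\rangle$ whose index equals their minimum. The crucial device your proposal is missing is that $\hmnverifier$ can be used as a \emph{membership oracle}: once some positive example exceeds $x'$, querying the singleton candidate $\{x'\}$ returns a counterexample iff $x'$ is not in the target, so the memoryless $\hmncegis$ engine can \emph{reconstruct its entire positive history} below the current maximum and then output the finite target exactly. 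This is what turns the ``false $\bot$'' into an advantage; without it, ``keep refining without discarding progress'' does not obviously give convergence for a single iterative engine on the whole family. On the $\cegis$ side the paper then argues that after convergence on a prefix $\trace[n]$ with counterexamples $\ce[n]$, the engine behaves identically on $\trace[n]\langle 1,z_1\rangle^\infty$ and on $\trace[n]\langle 0,z_2\rangle\langle 1,z_1\rangle^\infty$ for a fresh $\langle 0,z_2\rangle$, hence cannot separate two finite languages differing in one forgotten element. Until you supply a concrete family together with both halves of this argument, the second non-inclusion --- and hence the theorem --- remains open in your writeup.
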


We prove this using the following two lemma. The first lemma \ref{lemma-cegisbetter}
shows that there is a family of
languages from which a program recognizing a language can be synthesized by $\cegis$ but,
this can not be done by $\hmncegis$. The second lemma \ref{lemma-hcegisbetter} shows that
there is another family of languages from which a program recognizing a language can be synthesized by
$\hmncegis$ but not by $\cegis$.

\begin{lemma} \label{lemma-cegisbetter}
There is a family of languages $\mathcal{L}$ such that for the candidate programs $\mathcal{\algo}$ corresponding to
$\mathcal{L}$, $\hmncegis$ can not synthesize a program $\algo$ in  $\mathcal{\algo}$
recognizing some language $\lang$ in $\mathcal{L}$ but
$\cegis$ can synthesize  $\algo$, that is,
 $\cegis \not \subseteq \hmncegis$
\end{lemma}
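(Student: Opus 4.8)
The plan is to exhibit a single family of languages that $\cegis$ identifies but on which the history-bounded verifier is so weak that it degenerates into a pure text learner, which provably fails. Concretely, I would take the \emph{superfinite} family
$$\mathcal{L} = \{\nat\} \cup \{\nat_j \mid j \geq 1\},$$
consisting of $\nat$ together with every finite initial segment $\nat_j = \{n \mid n < j\}$. This is an indexed family of non-empty languages, since the predicate $n \in \nat_j$ (and $n \in \nat$) is decidable, so a recursive $\template$ exists. The two obligations are then $\mathcal{L} \in \cegis$ and $\mathcal{L} \notin \hmncegis$, which together give $\cegis \not\subseteq \hmncegis$.

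For the positive direction I would build an iterative $\engine_{\cegis}$ with initial guess $\algo_0 = \nat$ that operates in two modes, with the current mode and the relevant bound encoded in the conjecture index itself, so that the single recursive $F$ suffices. While the conjecture is $\nat$ and the returned counterexample is $\bot$, the engine keeps guessing $\nat$; since $\nat \subseteq \lang$ holds iff $\lang = \nat$, this converges correctly exactly when the target is $\nat$. As soon as $\verifier_\lang(\nat)$ returns a genuine counterexample the target must be some $\nat_j$, and the engine switches permanently to \emph{segment mode}, maintaining the conjecture $\nat_{m+1}$ where $m$ is the largest positive sample seen so far (read off the conjecture index and enlarged whenever a strictly larger example arrives). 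I would verify that in segment mode every conjecture $\nat_{m+1}$ satisfies $m+1 \leq j$, so it is always a subset of the target and the verifier keeps answering $\bot$, and that once the trace has exhibited $j-1$ the conjecture locks at $\nat_j$. Hence $\engine_{\cegis}$ converges to the correct program for every member of $\mathcal{L}$.

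The heart of the argument is the negative direction, and it rests on one observation about the history-bounded verifier on this particular family: it is identically $\bot$. If the target is $\nat$ then every candidate $\lang_i \in \mathcal{L}$ satisfies $\lang_i \subseteq \nat$, so $\overline{\nat} \cap \lang_i = \emptyset$ and $\hmnverifier_\nat$ returns $\bot$. If the target is $\nat_j$, then for any candidate the set $\overline{\nat_j} \cap \lang_i$ of counterexamples contains only numbers $\geq j$, whereas every element appearing in a trace of $\nat_j$ is $\leq j-1$; thus no counterexample can be strictly smaller than a previously seen sample, and again $\hmnverifier_{\nat_j}$ returns $\bot$. Consequently every $\ce(i)$ fed to $\engine_{\hmncegis}$ equals $\bot$, so its conjecture sequence depends only on the positive trace: on $\mathcal{L}$, $\hmncegis$ coincides exactly with identification in the limit from text.

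It then remains to show that $\mathcal{L}$ is not identifiable from text, and this is the step I expect to be the main obstacle to write cleanly. I would use the classical locking/diagonalization argument for superfinite classes: assuming some text learner identifies $\mathcal{L}$, first drive it on samples from $\nat_1$ until it conjectures $\nat_1$, then append samples from $\nat_2 \setminus \nat_1$ until it conjectures $\nat_2$, and so on. The resulting infinite text enumerates all of $\nat$, so it is a legitimate text for the target $\nat$, yet the learner conjectures $\nat_1, \nat_2, \nat_3, \ldots$ infinitely often and therefore never converges to the program for $\nat$, contradicting identification. Because $\hmncegis$ on $\mathcal{L}$ is nothing more than text identification, no $\engine_{\hmncegis}$ can identify $\mathcal{L}$, giving $\mathcal{L} \in \cegis \setminus \hmncegis$ and hence $\cegis \not\subseteq \hmncegis$.
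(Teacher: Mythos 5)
Your proof is correct, but it takes a genuinely different route from the paper's, and the difference matters. The paper works with the bounded segments $\lang_i = \{n \mid n \in \nat \wedge n \leq i\}$ alone, makes the same key observation you do (every potential counterexample to any candidate exceeds every positive example of the target, so the history-bounded verifier is identically $\bot$), and concludes directly that $\engine_{\hmncegis}$ fails; its positive direction is essentially your segment mode, an increasing chain of guesses that stops at the first counterexample. Your version adds the infinite language $\nat$ to the family and then finishes the negative direction by reducing $\hmncegis$ on this family to identification in the limit from text and invoking Gold's superfinite-class diagonalization. That extra step is not cosmetic: silencing the verifier does not by itself defeat the engine, since $F$ still receives the full positive trace, and the family of finite initial segments on its own \emph{is} identifiable from text (conjecture the segment determined by the largest sample seen so far, which can be read off and updated through the conjecture index). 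So your inclusion of $\nat$, which breaks text-learnability while remaining separable by $\cegis$ via the subset query on the conjecture $\nat$, is exactly what makes the negative direction airtight, at the cost of a slightly more involved two-mode $\cegis$ engine. The only step to write with care is the locking-prefix part of the Gold argument --- a finite prefix of the all-zeros text after which the learner outputs the program for $\nat_1$ exists because the learner must converge on that text --- but that is standard.
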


\begin{proof}
Consider the languages formed by upper bounding the elements by some fixed constant, that is,
$$\lang_i = \{ n | n \in \nat \wedge n \leq i \}$$
Now, consider the family of languages consisting of these, that is,
$\mathcal{L} = \{ \lang_i | i \in \nat \}$.
Given this family $\mathcal{L}$, let the target language $\lang$ (for which we want to synthesize a recognizing program $\algo$) be $\lang_i$.

If we obtain a trace $\trace[j]$ at any point in synthesis
using history bounded counterexamples, then for any intermediate program $\algo_j$ proposed by
$\engine_{\hmncegis}$, $\hmnverifier_\lang$  would always return $\bot$ since all the counterexamples
would be larger than any element in $\trace[j]$. This is the consequence of the chosen languages
in which all counterexamples to the language are larger than any positive example of the language.
So, $\engine_{\hmncegis}$ can not synthesize $\algo$ corresponding to the target language $\lang$.

But we can easily design a synthesis engine $\engine_{\cegis}$
using arbitrary counterexamples that can synthesize
$\algo$ corresponding to the target language $\lang$. The algorithm starts with $L_0$ as its
initial guess. If there is no counterexample, the algorithm next guess is $\lang_1$.
In each iteration $j$,
the algorithm guesses $\lang_{j+1}$ as long as there are no counterexamples. When a counterexample
is returned by $\verifier_\lang$ on the guess $\lang_{j+1}$,
the algorithm stops and reports the previous guess $\lang_j$ as the correct language.

Since the elements in each language $\lang_i$ is bounded by some fixed constant $i$, the above synthesis procedure $\engine_{\cegis}$ is guaranteed to terminate after $i$ iterations when identifying any language $\lang_i \in \mathcal{L}$. Further, $\verifier_\lang$ did not return any counterexample up to iteration $j-1$ and so, $\lang_j \subseteq \lang_i$. And in the next iteration, a counterexample was generated.
So, $\lang_{j+1} \not \subseteq \lang_i$. Since, the languages in $\mathcal{L}$ form a monotonic chain
$\lang_0 \subset \lang_1 \ldots $. So, $\lang_j = \lang_i$. In fact $j=i$ and in the $i$-th iteration,
the language $L_i$ is correctly identified by $\engine_{\cegis}$.

Thus,  $\cegis \not \subseteq \hmncegis$.
\end{proof}

This shows that $\cegis$ can be used to identify programs when $\hmncegis$ will fail. Putting a
restriction on the verifier to only produce counterexamples which are bounded by the
positive examples seen so far does not strictly increase the power of synthesis.

We now show the nonintuitive result that this restriction enables synthesis
of programs which can not be synthesized by $\cegis$. The proof uses a
diaganolization argument similar to the argument used in \cite{Lange08} for showing the
increase in inductive synthesis power when negative examples are introduced in addition
to the positive examples. This argument is presented in Section~\ref{sec-rel}.
Recall that the set of languages considered in that case were
$L_1 = V^* - \{x_1\}, L_2 = V^* - \{x_2\}, \ldots$ and the language $V^*$. The argument
relies on indistinguishability of  $V^* - \{x_1\}$ and  $V^*$ with respect to finite traces
of positive examples.

In the proof below, we similarly construct a language which is not distinguishable
using arbitrary counterexamples and instead, it relies on the verifier keeping a record of the
largest positive example seen so far and restricting counterexamples to those below the largest positive
example. We use the tuple notation introduced in Section~\ref{sec-not} to clearly identify the
diagnolization.

\begin{lemma} \label{lemma-hcegisbetter}
There is a family of languages $\mathcal{L}$ such that for the candidate programs $\mathcal{\algo}$ corresponding to
$\mathcal{L}$, $\cegis$ can not synthesize a program $\algo$ in  $\mathcal{\algo}$
recognizing some language $\lang$ in $\mathcal{L}$ but
$\hmncegis$ can synthesize  $\algo$, that is,
 $\hmncegis \not \subseteq \cegis$
\end{lemma}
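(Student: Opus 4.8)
The plan is to exhibit a single indexed family $\mathcal{L}$, encoded with the pairing $\langle\cdot,\cdot\rangle$ of Section~\ref{sec-not}, that is identified by some $\engine_{\hmncegis}$ but by no $\engine_{\cegis}$, mirroring the diagonalization of Section~\ref{sec-rel} ($V^*$ versus $V^*-\{x_i\}$). The asymmetry I intend to exploit is that $\hmnverifier_\lang$ answers relative to the largest positive example in the trace: a $\bot$ returned by $\hmnverifier_\lang$ on a conjecture $\lang_i$ certifies that $\lang_i$ \emph{agrees with the target $\lang$ on the whole initial segment below the current bound}, even when $\lang_i$ still contains arbitrarily large spurious elements, whereas $\verifier_\lang$ only certifies the \emph{global} relation $\lang_i\subseteq\lang$ and, when that fails, the adversary is free to return a large, uninformative element of $\lang_i\cap\overline{\lang}$. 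I would build $\mathcal{L}$ so that the column $\langle 0,\cdot\rangle$ carries an infinite ``spine'' shared by the target and by a family of decoys (this forces the trace bound $\max_{j}\trace(j)$ to grow without limit), while the column $\langle 1,\cdot\rangle$ carries distinguishing marks arranged so that separating the target $\lang$ from each decoy $\lang'$ amounts to detecting the \emph{absence} of a mark.

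For the positive direction I would give $\engine_{\hmncegis}$ explicitly. Because every $\lang\in\mathcal{L}$ is infinite with unbounded elements, the trace bound climbs past any fixed value, so $\hmnverifier_\lang$ is eventually forced to report any \emph{low} discrepancy of a conjecture. The engine sweeps candidates in index order, using the growing bound: a wrong candidate is discarded either because the trace exhibits a positive example it omits (the under-approximation case, handled by $\samples(\trace)$ exactly as in the text-learning argument) or because $\hmnverifier_\lang$ returns a spurious element once the bound exceeds it (the over-approximation case). Since the construction makes every decoy differ from the target by a mark that becomes low in the limit, each wrong candidate is eliminated after finitely many steps and $\engine_{\hmncegis}$ converges to the unique surviving program, namely $\algo$ with $\lang(\algo)=\lang$.

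For the negative direction I would run a Gold-style locking/indistinguishability argument against an arbitrary $\engine_{\cegis}$, contradicting the assumption that it identifies $\mathcal{L}$. Fixing the infinite target $\lang$, I would first extract a finite locking prefix of trace-and-counterexample pairs on which $\engine_{\cegis}$ has settled on $\algo$; I would then select a decoy $\lang'\in\mathcal{L}$, $\lang'\neq\lang$, agreeing with $\lang$ on all the (finitely many) positive examples and avoiding all the counterexamples seen so far, and extend the locking prefix to a full trace of $\lang'$ accompanied by a counterexample stream produced by the nondeterministic $\verifier_\lang$ that is valid for $\lang'$ yet identical to a legal stream for $\lang$. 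The marks are placed so that the only elements $\verifier_\lang$ could report on any conjecture the engine raises are large and common to $\overline{\lang}\cap\overline{\lang'}$, so the whole observation is consistent with both targets; hence $\engine_{\cegis}$ stays locked on $\algo$ and fails on $\lang'$.

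The main obstacle is precisely this last step: showing the adversary \emph{always} has a single counterexample valid for both the target and the decoy, for \emph{every} conjecture $\lang_i$ the engine might raise. The delicate point is that conjecturing the decoy itself would ordinarily give a decisive $\bot$-versus-counterexample signal; the construction must therefore ensure that the only discrepancies $\verifier_\lang$ can surface between a candidate and $\lang$ are the shared high marks in column $\langle 1,\cdot\rangle$, while the genuinely separating discrepancy is a \emph{low} mark that $\verifier_\lang$ is permitted to decline to report but $\hmnverifier_\lang$ is forced to report once the bound rises. Getting the tuple encoding to guarantee this ``every relevant counterexample is high and shared'' property uniformly over all conjectures, while keeping $\mathcal{L}$ a genuine indexed family (recursive $\template$), is where the real work lies; the rest reduces to the locking-sequence bookkeeping and the routine verification of the two convergence claims.
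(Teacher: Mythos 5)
Your high-level strategy --- a Gold-style diagonalization in the pair encoding, with the negative direction via a locking/indistinguishability argument and the positive direction via the history-bounded verifier surfacing low discrepancies --- is the right family of ideas, but the proposal stops exactly where the proof has to happen: no concrete family $\mathcal{L}$ is exhibited, and you yourself flag that making ``every relevant counterexample high and shared, uniformly over all conjectures'' work is ``where the real work lies.'' That difficulty is real and your sketch does not resolve it. In particular, when $\engine_{\cegis}$ conjectures a decoy $\lang'$ itself while the target is $\lang$, the separating low mark lies in the one-sided difference that $\verifier_\lang$ inspects, so the adversary can decline to report it only if you pad $\lang'\cap\overline{\lang}$ with extra shared elements for \emph{every} such pair; and since a trace must satisfy $\samples(\trace)=\lang'$, any distinguishing element that is \emph{present} in the decoy eventually shows up as a positive example anyway. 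Your all-infinite-languages design runs into both problems at once.

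The paper's construction sidesteps this with a mechanism your proposal is missing: it takes $\mathcal{L}=\mathcal{L}^{fin}\cup\mathcal{L}^{diag}$, where $\mathcal{L}^{fin}$ consists of \emph{finite} languages each containing a trigger element $\langle 1,k\rangle$, and $\mathcal{L}^{diag}$ consists of infinite languages supported on the $\langle 0,\cdot\rangle$ column indexed by their minimum. The key trick for the positive direction is that $\hmnverifier_\lang$ applied to singleton candidates $\{x'\}$ with $x'$ below the largest positive example seen acts as a \emph{membership oracle}: it returns $x'$ iff $x'\notin\lang$. So once $\engine_{\hmncegis}$ sees a trigger $\langle 1,\cdot\rangle$ it can \emph{recover every positive example it has forgotten} by enumerating all smaller elements, which, together with finiteness of the target, yields identification; the $\mathcal{L}^{diag}$ case is handled by the minimum-index convention. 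The negative direction is then not about hiding a low counterexample behind high ones, but about finite memory: after $\engine_{\cegis}$ locks on $\trace[n],\ce[n]$, one inserts a forgotten positive example $\langle 0,z_2\rangle$ outside $\samples(\trace[n])\cup\samples(\ce[n])$ and observes that the two finite languages $\samples(\trace[n])\cup\{\langle 1,z_1\rangle\}$ and $\samples(\trace[n])\cup\{\langle 0,z_2\rangle,\langle 1,z_1\rangle\}$ are indistinguishable to $\engine_{\cegis}$, which cannot regenerate lost examples. Without the singleton-query/membership-oracle idea (or a substitute for it) and an actual indexed family realizing it, the proposal does not yet constitute a proof.
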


\begin{proof}
Consider the following languages $\lang^{01}_i = \{ \langle j, n \rangle | j \in \{0,1\}, n \in \nat \}  $. We now construct a family of languages in $\lang^{01}_i$ which are finite and have atleast one
 element of the form $\langle 1, . \rangle$, that is,
$$\mathcal{L}^{fin} = \{\lang^{01}_i | i \in \nat \wedge |\lang^{01}_i| \text{is finite} \wedge \exists k \; s.t. \; \langle 1, k \rangle \in \lang^{01}_i \}$$
Now consider the languages $L_i$ which are subsets of $\nat$. We consider only
those languages
$L_i$ such that the index $i$ of the language is also the smallest element in the language,
that is, $ i = \min(L_i)$.
We now build a language of pairs as follows:
$L^{diag}_i = \{\langle 0, n \rangle | n \in L_i \}$ if $ i = \min(L_i)$ and undefined, otherwise
We construct a second family of languages using these languages.
$\mathcal{L}^{diag} = \{ \lang^{diag}_i \} $ if $\lang^{diag}_i$ is defined for index $i$.
Now, we consider the following family of languages
$$ \mathcal{L} = \mathcal{L}^{fin} \cup \mathcal{L}^{diag}$$
We show that there is a language $\lang$ in $\mathcal{L}$ such that the program $\algo$ recognizing $\lang$ can not be synthesized by $\cegis$ but $\hmncegis$ can synthesize all programs recognizing
any language in $\mathcal{L}$.

The key intuition is as follows.
If the examples seen by synthesis algorithm are all of the form $\langle 0, . \rangle$, then
any synthesis technique can not differentiate whether the
language belongs to $\mathcal{L}^{fin}$ or $\mathcal{L}^{diag}$. If the language belongs
to  $\mathcal{L}^{fin}$, the synthesis engine would eventually obtain
an example of the form  $\langle 1, . \rangle$ (since each language in $\mathcal{L}^{fin}$
has atleast one element of this kind and these languages are finite). While
the synthesis technique using arbitrary counterexamples can not recover the previous examples,
the techniques with access to the verifier which produces history bounded counterexamples
can recover all the previous examples.

We can easily specify a $\engine_\hmncegis$ which can synthesize programs
that correspond to languages in $\mathcal{L}$. $\engine_\hmncegis$ works as follows.
If all the elements seen so far are of the form $\langle 0, . \rangle$,
then the synthesis algorithm and picks the
minimum $j$ such that $\langle 0, j \rangle$ has been seen as an example by the synthesis engine.
The proposed program is $\algo_j$ corresponding to $\lang_j$.
If the proposed
program is not the correct program, $\hmnverifier$ returns $\langle 0, j_{ce} \rangle$
such that $j_{ce} < j$. This is guaranteed since $\hmnverifier$ returns counterexamples
smaller than the examples seen so far, and we have assumed that $\algo_j$ is not correct.
So, iteratively, the algorithm would discover a language from $\mathcal{L}^{diag}$
eventually.  But if the language is from $\mathcal{L}^{fin}$, then we know that
all languages in $\mathcal{L}^{fin}$ are finite and have at
least one element of the form $\langle 1,.\rangle$.
After $\engine_\hmncegis$  sees $\langle 1,.\rangle$, for every future positive
example $x$, it queries $\hmnverifier$ with the singleton language having
only one element $\{\langle x +2, 0 \rangle \}$. Clearly, $\langle x +2, 0 \rangle $ is not in the language since
it only contains elements of the form $\langle 0, . \rangle$ and $\langle 1, . \rangle$.
But $\hmnverifier$
returns no counterexample for $\{\langle x_{max} +2, 0 \rangle \}$  if $x_{max}$ is the largest positive example seen so far.
At this point, we can recover all positive examples seen previously by enumerating all $x' < x_{max}$
and testing the candidate language $\{ x' \}$ with $\hmnverifier$.
We get a counterexample if and only if $x'$ is not in the target language.
Further, the target language is finite and hence, enumerating members of the language is sufficient
to identify the target language after consuming a finite trace.
Thus, $\engine_\hmncegis$  can synthesize programs corresponding to any language in
$\mathcal{L}$.

We now prove the infeasibility of $\cegis$ for this class of languages. Let us
assume that $\mathcal{L} \in \cegis$. So, there is a synthesis engine $\engine_\cegis$
which can synthesize programs corresponding to languages in $\mathcal{L}$.
Let us consider trace $\trace$ and counterexample sequence $\ce$ such that $\engine_\cegis$
converges in $n$ steps that is $\engine_\cegis(\trace, \ce) \rightarrow \engine_\cegis(\trace[n], \ce[n])$.
Now, $\ce[n]$ is a valid counterexample sequence of any language $L$ such that
$\samples(\trace[n]) \subseteq L \subseteq \nat - \samples(\ce[n])$.
Since $\engine_\cegis$ must
recognize a language from any trace and any arbitrary counterexample sequence, we choose
a trace and counterexample sequence as follows.
Let us consider a trace $\trace'$  of the form $\trace[n] (\langle 1, z_1 \rangle)^\infty$.
The corresponding counterexample trace discovered by $\engine_\cegis$ is $\ce[n]$ followed
by minimal counterexamples, if any, after observing $\langle 1, z_1 \rangle$.
Now, we pick an element
$\langle 0, z_2 \rangle$ such that  $\langle 0, z_2 \rangle \not \in \ce[n]$
and $\langle 0, z_2 \rangle \not \in \trace[n]$.
Since $\ce[n]$ is a valid counterexample sequence of any language $L$ such that
$\samples(\trace[n]) \subseteq L \subseteq \nat - \samples(\ce[n])$, the behavior
of $\engine_\cegis$ is same for $\trace[n] (\langle 1, z_1 \rangle)^\infty$
as it is for $\trace[n] \langle 0, z_2 \rangle (\langle 1, z_1 \rangle)^\infty$.
Thus, $\engine_\cegis$ can not distinguish between the
two languages: $\lang^d = \samples(\trace[n]) \cup \{ \langle 1, z_1 \rangle) \}$
and  $\lang^{d'} = \samples(\trace[n]) \cup \{ \langle 0, z_2 \rangle, \langle 1, z_1 \rangle) \}$.
Intuitively, $\engine_\cegis$ can forget some positive examples seen before observing  $\langle 1, z_1 \rangle$
and there is no way to regenerate these as it can be done with $\engine_\hmncegis$.

Thus, $\hmncegis \not \subseteq \cegis$.

\end{proof}

Hence, $\hmncegis$ can synthesize programs from some program classes where $\cegis$ fails to synthesize
the correct program. But contrariwise, $\hmncegis$ also fails at synthesizing programs from some program
classes where $\cegis$ can successfully synthesize a program. Thus, their synthesis power is not
equivalent, and none dominates the other.

\section{Discussion and Conclusion}
The paper presents formal analysis of the impact of counterexample selection on
what programs can be synthesized, without any restriction on the type of program other than it be from a countable set.
We have shown that the use of minimal counterexamples does not enable synthesizing programs from newer space of candidate programs.
In practice, this means that any domain where $\mncegis$ can be used, use of $\cegis$ would also be possible
since $\mncegis$ successfully terminates with correct program on a candidate space if and only if
$\cegis$ also successfully terminates with the correct program. So, there is no increase or decrease
in the power of synthesis by using minimal counterexamples.
But $\hmncegis$ can synthesize programs from some program classes where $\cegis$ fails to synthesize
the correct program. Contrariwise, $\hmncegis$ also fails at synthesizing programs from some program
classes where $\cegis$ can successfully synthesize a program. Thus, their synthesis power is not
equivalent, and none dominates the other. This paper is a first step towards the theoretical characterization of Counterexample Guided Inductive Synthesis technique: $\cegis$.

Further analysis of $\cegis$ is pertinent given the widespread adoption of $\cegis$ as one of the
standard paradigms for automated synthesis. We envision the following directions in which further work
can be done to better understand the power of $\cegis$ techniques.
\begin{itemize}
\item Speed of convergence: $\mncegis$ and $\cegis$ have equal synthesis power and if one of the techniques successfully identifies a program from a given program class, the other would also be able to successfully synthesize this program. But would both techniques need the same number of counterexamples for successfully synthesizing the program ? If we measure the complexity of automated synthesis using the number of counterexamples needed to synthesize a program, the comparison of the complexity of $\mncegis$ and $\cegis$ is open.

    Similarly, for the program spaces on which both $\hmncegis$ and $\cegis$ terminate, can we compare the number of counterexamples needed by the two techniques to synthesize a program.

\item Newer variants of counterexamples: The two new variants of counterexamples considered in this paper; namely, the minimal counterexamples and the history bounded counterexamples are not the only variants that can be used in $\cegis$.
    The question of whether there are other variants of counterexamples which would enable synthesis in program spaces beyond the power of conventional $\cegis$ is open.

     In particular, consider  another new variant of counterexamples which are minimal counterexamples among all the counterexamples which are larger than the largest positive examples seen so far. This counterexample captures another notion of being {\it close to correct} counterexample, and it would be interesting to investigate whether it increases the power of $\cegis$.
\end{itemize}

In summary, we presented variants of $\cegis$ using different kinds of counterexamples
and compared the power of these variant synthesis techniques with $\cegis$. This is a first step
towards a better theoretical understanding of the synthesis power of $\cegis$ technique.

\bibliographystyle{eptcs}
\bibliography{paper}

%\nocite{*}
%\bibliographystyle{eptcs}
%bibliography{generic}
\end{document}